\DeclareMathOperator*{\argmax}{arg\,max}
\DeclareMathOperator\supp{supp}
\DeclareMathOperator\defeq{\stackrel{\text{def}}{=}}
\newtheorem{theorem}{Theorem}
\newtheorem{claim}{Claim}
\newcommand{\eps}{\varepsilon}
\newcommand{\ulp}{\mathrm{ulp}}
\newcommand{\sfunction}[1]{\textsf{\textsc{#1}}}
\title{Precision-based attacks and interval refining: how to break, then fix, differential privacy on finite computers}
\author{
  Samuel Haney\\
  Tumult Labs\\
  \texttt{sam.haney@tmlt.io}
  \and
  Damien Desfontaines\\
  Tumult Labs\\
  \texttt{damien@desfontain.es}
  \and
  Luke Hartman\\
  Tumult Labs\\
  \texttt{luke.hartman@tmlt.io}
  \and
  Ruchit Shrestha\\
  Tumult Labs\\
  \texttt{ruchit.shrestha@tmlt.io}
  \and
  Michael Hay\\
  Tumult Labs\\
  Colgate University\\
  \texttt{michael@tmlt.io}
}
\begin{document}

\maketitle

\begin{abstract}
  Despite being raised as a problem over ten years ago, the imprecision of floating point arithmetic continues to cause privacy failures in the implementations of differentially private noise mechanisms.
  In this paper, we highlight a new class of vulnerabilities, which we call \emph{precision-based attacks}, and which affect several open source libraries.
  To address this vulnerability and implement differentially private mechanisms on floating-point space in a safe way, we propose a novel technique, called \emph{interval refining}.
  This technique has minimal error, provable privacy, and broad applicability.
  We use interval refining to design and implement a variant of the Laplace mechanism that is equivalent to sampling from the Laplace distribution and rounding to a float.
  We report on the performance of this approach, and discuss how interval refining can be used to implement other mechanisms safely, including the Gaussian mechanism and the exponential mechanism.
\end{abstract}

\section{Introduction}

There are many issues that can arise when translating abstract differentially private algorithms to real-world implementations.
One such issue, as observed by Mironov \cite{Mironov.SignificanceLeastSignificant.2012} over ten years ago, is the use of floating point arithmetic.
Mironov demonstrated this issue using the Laplace distribution, commonly used as a building block for DP mechanisms:
he showed that sampling Laplace noise in a naive way creates ``holes'' in the output distribution, and that these holes can be leveraged to leak private data and nullify the privacy guarantee.
Later, similar issues were found in other common primitives, like the exponential mechanism~\cite{ilvento2020implementing} and the Gaussian mechanism~\cite{jin2021we}.
The researchers exposing these vulnerabilities also proposed mitigations, but privacy libraries tend not to use these mitigations: some are complex, not generalizable, and in some cases come with a large utility cost.
Instead, the creators of libraries have invented other techniques to mitigate floating point attacks.

\begin{itemize}
  \item Google's differential privacy team developed alternative procedures to add Laplace and Gaussian noise by sampling \emph{discrete} distributions, and rounding to a fixed granularity~\cite{GoogleSecureNoise}. These are implemented in Google's differential privacy libraries~\cite{GoogleDP}.
  \item Holohan et al.~\cite{HolohanBraghin.SecureRandomSampling.2021} proposed sampling Laplace and Gaussian noise in a way that makes it computationally harder for an attacker to reverse-engineer the output, without any rounding. This approach is implemented in diffprivlib~\cite{diffprivlib}, IBM's differential privacy library.
  \item The authors of SmartNoise Core~\cite{SmartNoiseCore} and OpenDP~\cite{OpenDP} implemented yet another approach, which consists of using MPFR~\cite{MPFR}, an arbitrary precision arithmetic library, to generate a \emph{hole-free} unit Laplace or Gaussian distribution~\cite{OpenDPsamplers}. 
\end{itemize}
Unfortunately, we show that the approaches taken by diffprivlib, SmartNoise Core and OpenDP are susceptible to a new class of floating point vulnerabilities, which we call \emph{precision-based attacks}, and describe in \cref{sec:existing-vulnerabilities}.

Google's approach has other drawbacks: (1) the technique cannot be easily generalized to a safe implementation of other mechanisms, like the exponential mechanism, (2) the rounding precision has to be set in advance, which can be a usability issue and limits the mechanism safety to a strict subset of the floating-point range, and (3) it involves a small $\delta$ in the privacy guarantee, which makes it difficult to use with privacy accounting based on pure or zero-concentrated DP.
% here is the issue: https://github.com/opendp/smartnoise-core/issues/242
% there is also this paper: https://www.iacr.org/archive/eurocrypt2006/40040493/40040493.pdf which Frank mentions in his bug report on smartnoise

In this paper, we propose a new technique for safely implementing DP mechanisms in floating-point space.
Our technique, which we call \emph{interval refining}, has three key properties.
\begin{description}
  \item[Minimal error.] It has the same error as if we were sampling the desired distribution exactly, and rounding to the nearest float. The only error comes from this last rounding step, which is unavoidable for any mechanism returning a float.
  \item[Provable privacy.] It provides the same privacy guarantees as the abstract algorithm it implements. As such, it can be used in contexts where the privacy accounting is based on pure DP, zero-concentrated DP, or other definitions forbidding infinite privacy loss. 
  \item[Broad applicability.] It can be used to implement any mechanism in floating-point space. In this paper, we describe our implementation of the Laplace mechanism, but we also outline how the technique can be applied to other mechanisms, such as the Gaussian and exponential mechanisms.
\end{description}

Our technique is conceptually simple, and follows the same idea as inverse transform sampling, where a sample from a uniform distribution is transformed into a sample of an arbitrary probability distribution by passing it as input to the inverse cumulative distribution function (CDF) of this distribution.
But instead of simply generating a sample (which may be an irrational number), we sample an \emph{interval}, initially large and then iteratively refined, until the entire interval would be rounded to the same floating point number, and we can return this float.
Using this technique, we implement the Laplace mechanism (\cref{alg:lap}) in a way that is identical to sampling from the real-valued Laplace distribution and rounding the sample to the next highest float.
This process is illustrated in~\cref{fig:laplace}.

The rest of the paper is structured as follows: In \cref{sec:existing-vulnerabilities}, we describe novel floating point attacks with applications to existing differential privacy libraries.
In \cref{sec:overview-of-technique}, we outline the key ideas underlying our technique, and explain how to simulate sampling from a real-valued distribution and rounding the result to floating-point.
In \cref{sec:safe-laplace-mechanism}, we instantiate our technique and give a precise algorithm implementing sampling from the Laplace distribution, and rounding to the next highest float; we describe our implementation in the Tumult platform~\cite{TumultPlatform} and report on empirical results.  In~\cref{sec:extensions}, we discuss extensions, limitations, and future work.

\section{Precision-based attacks}
\label{sec:existing-vulnerabilities}

In this section, we outline the vulnerabilities we found in existing libraries.
These vulnerabilities, which we call \emph{precision-based attacks}, rely on two simple observations.
First, if a double-precision floating-point number $x$ is such that $2^k \le |x| < 2^{k+1}$ for some integer $k$, then $x$ is a multiple of $2^{k-52}$.
This quantity $2^{k-52}$ is known as the \emph{unit in the last place}~\cite{goldberg1991what} (or \emph{ulp}) of $x$, we denote it by $\ulp_x$. Second, when adding any double to a fixed double $x$, then the output is always a multiple of $\ulp_x/2$.
The proof of these facts can be found in \cref{sec:attack-details}.

This creates a simple vulnerability affecting implementations that add noise to floating-point numbers without any rounding: when adding noise to e.g. $1$, all possible outputs are multiples of $2^{-53}$.
But when adding noise to 0, the output is exactly the value of the sampled noise. 
If it is possible to sample noise that is \emph{not} a multiple of $2^{-53}$, then this creates a distinguishing event between inputs $0$ and $1$.
This is exactly what happens in additive noise mechanisms in diffprivlib, SmartNoise Core, and OpenDP: if the noise value $r$ is such that $|r|<0.5=2^{-1}$, then it might not be a multiple of $2^{-53}$.
If we return it as is (after adding it to 0), the attacker can deduce that the true value was not $1$.
This can happen arbitrarily often as the noise scale gets smaller; with Laplace noise of scale 1 (corresponding to a counting query with $\eps=1$), approximately 25\% of outputs are distinguishing events in diffprivlib, SmartNoise Core, and OpenDP. 

Perhaps surprisingly, precision-based attacks can create vulnerabilities in other algorithms, besides simple additive noise mechanisms.
Consider the mechanism to compute quantiles based on the exponential mechanism, introduced in~\cite{smith2011privacy}.
This algorithm works in three steps: it splits the output space in intervals based on the input data, privately chooses one interval using the exponential mechanism, and returns a uniform number from this interval.
To implement the last step and generate a uniformly random number in an interval $[x,y)$, a naive approach consists of generating a uniformly random number $r$ in $[0,1)$, and returning $x+(y-x) \cdot r$.
The possible precision of the output depends on the value of $x$, which creates the opportunity for precision-based attacks. 
Finding a pair of input databases demonstrating such distinguishing events is a little more involved than for additive noise mechanisms.
In \cref{sec:attack-details}, we provide more detail on this class of vulnerabilities, and show pairs of datasets which lead to distinguishing events for quantile mechanisms in both diffprivlib and SmartNoise Core.

This novel class of attacks shows that mitigating floating-point vulnerabilities is more difficult than it seems.
It suggests that approaches implemented in production-grade software should be formally documented, and provide a proof that they satisfy the desired privacy guarantees.

These vulnerabilities were communicated to, and acknowledged by, the authors of diffprivlib, SmartNoise Core, and OpenDP via personal correspondence and a public bug report~\cite{OpenDPBugReport} in November and December 2021.

\section{Overview of interval refining}
\label{sec:overview-of-technique}

\def\R{\mathbb{R}}
\def\float{\mathsf{float}}

In this section, we give an overview of our interval refining technique.
Our goal is to simulate the following process: sample $X \sim \mathcal{D}$, where $\mathcal{D}$ is a distribution over sample space $\Omega=\R$, and then round the value to a 64 bit floating point number.
Any rounding scheme is acceptable; for the purposes of exposition, we round up. 
If we call $\float : \Omega \rightarrow S$ a function that maps any real number to the next largest 64-bit floating point number, our goal is to sample $X \sim \mathcal{D}$ and output $\float(X)$.

The main complication of this approach is that prior to rounding, the sampled value may be an irrational number and thus cannot be represented with finite memory.
To solve this issue, we need three key ideas.

The first key idea starts with the observation that the float function partitions the sample space into \emph{intervals}, where all elements in each interval are assigned to the same float value:
$\float$ maps elements of the (infinite) sample space to a finite space $S$ (the set of 64-bit floats).
Therefore, our goal is to sample an element $s \in S$ with probability equal to the probability of sampling an element from $\mathcal{D}$ that maps to $s$.
That is, $\Pr[\text{sampling } s] = \Pr_{X \sim \mathcal{D}}[X \in \float^{-1}(s)]$.
This means that instead of sampling $X$, we can sample a progressively finer interval around $X$:
we start from a large interval, and iteratively refine it until all values within the interval map to the same float.
Conceptually, we never sample $X$ directly; instead, we sample enough information about $X$ to determine to which float it should be mapped.

This gives us the following high-level process.
\begin{enumerate}
    \item Set the current interval $I$ to be entire sample space: $I = \Omega = [-\infty, \infty]$
    \item \label{step:split_i} Partition the current interval $I$ into disjoint intervals $I_1, I_2, \dots$ such that $\bigcup_i I_i = I$, and sample interval $I_i$ with probability proportional to $Pr_{X \sim \mathcal{D}}[X \in I_i | X \in I]$. 
    \item Set the current interval $I$ to the selected interval $I_i$.
    \item If all values in $I$ round to the same float $s$ (i.e., $\exists s: I \subseteq \float^{-1}(s)$), then return $s$. \label{step:termination}
    \item Otherwise, repeat from Step \ref{step:split_i}.
\end{enumerate}
For step~\ref{step:split_i}, there are a number of reasonable ways to partition the space.
For reasons that will be explained, we choose to partition $I$ into two equi-probable sub-intervals $I_1$, $I_2$ such that $Pr_{X \sim \mathcal{D}}[X \in I_1] = Pr_{X \sim \mathcal{D}}[X \in I_2]$.
One can show that the process outlined above samples $s$ with the correct probability.

One challenge remains: the interval boundaries might be irrational numbers, which cannot be exactly represented on a computer with finite memory.

This brings us to our second key idea: using \emph{inverse transform sampling}~\cite{InverseTransformSampling}.
The intuition is simple: sampling $U \sim \mathrm{Uniform}(0,1)$ and computing $X = F^{-1}_\mathcal{D}(U)$, where $F^{-1}_\mathcal{D}$ is the inverse cumulative distribution function for distribution $\mathcal{D}$, is the same as sampling $X \sim \mathcal{D}$.
Applying this observation to our setting, instead of sampling intervals in the sample space of the distribution of interest, we can instead sample intervals uniformly in $[0,1]$, the sample space of $\mathrm{Uniform}(0,1)$, and use the inverse CDF to map these intervals to the sample space of the distribution we want to sample from.
A nice feature of this is that the probability of a uniform random variable being in an interval is equal to the interval's width, so if we want to sample equiprobable intervals, we simply divide the current interval in half and pick a half at random.

The process now looks like this:
\begin{enumerate}
    \item Set the current interval to $[0,1]$.
    \item Divide the current interval in half, and select one of the halves uniformly at random. \label{v2:step:split}
    \item Set the current interval to be the selected half, and denote the endpoints of the current interval as $(a,b)$.
    \item If all values in interval $[F_{\mathcal{D}}^{-1}(a), F_{\mathcal{D}}^{-1}(b)]$ round to the same float $s$, then return $s$.
    \label{v2:step:termination}
    \item Otherwise, repeat from Step \ref{v2:step:split}.
\end{enumerate}
Note that in this updated process, the interval end points $a,b$ are dyadic rationals and therefore can be represented exactly on a computer with finite memory.

There is one more complication to tackle: irrational numbers might still occur in the termination step (\ref{v2:step:termination}).
This brings us to our third and final key idea: approximating the inverse CDF to find an interval $I$ that contains $[F_{\mathcal{D}}^{-1}(a), F_{\mathcal{D}}^{-1}(b)]$, and whose endpoints are rational numbers. 
Then, the termination step becomes:
\begin{enumerate}
  \setcounter{enumi}{3}
  \item Compute an interval $I$ with rational endpoints such that
  % \set{F_{\mathcal{D}}^{-1}(x) \mid x \in [a,b) }
  $[F_{\mathcal{D}}^{-1}(a), F_{\mathcal{D}}^{-1}(b)] \subseteq I$.  If all values in interval $I$ round to the same float $s$, then return $s$.
\end{enumerate}
This approximate interval may be too wide and thus the algorithm may fail to terminate at the appropriate iteration, or fail to terminate entirely (e.g. if the approximation $I$ is always the entire real number line).
Our method of approximating $I$, discussed in~\cref{sec:safe-laplace-mechanism}, has a parameter that controls the precision of the approximation: by increasing the precision in each iteration, we can show that the algorithm eventually terminates with probability 1. 
The fact that the approximation error may cause the algorithm to run for additional iterations is not a problem: once the algorithm has arrived at endpoints $(a,b)$ such that $[F_{\mathcal{D}}^{-1}(a), F_{\mathcal{D}}^{-1}(b)] \subseteq \float^{-1}(s)$ for some $s$, further iterations will not change the outcome, since any subinterval of $[a,b]$ will still map to the same outcome $s$.

\section{Laplace mechanism: algorithm and implementation}
\label{sec:safe-laplace-mechanism}

\def\LD{Lap(\mu, \beta)}
\def\LD{{\mathcal{D}_{\mu, \beta}}}

In this section, we describe our algorithm for safely sampling from a Laplace distribution, explain how it can be the basis for the Laplace mechanism, and report on our experience implementing it.

\cref{alg:lap} is our proposed technique for sampling a Laplace random variable rounded to a float.  Let $\mathcal{D}_{\mu, \beta}$ denote the Laplace distribution with a given location $\mu$ and scale $\beta$.
This algorithm uses a subroutine, \sfunction{IntervalInvCDF}, for computing the inverse CDF of $\LD$ on an interval $[a, b]$.  The subroutine takes as input a tuple $\langle a, b \rangle$ of interval endpoints and a parameter $prec$ that controls the working precision of the function, which influences how closely the returned interval approximates the true interval.  

A little more formally, \sfunction{IntervalInvCDF} satisfies the following properties:

\begin{equation}
F_{\LD}^{-1}(x) \in \sfunction{IntervalInvCDF}_{\mu, \beta} (\langle a, b \rangle, prec) \text{ for } x \in [a, b]    
\end{equation}
and
\begin{equation}
\label{equ:interval-cdf-precision-in-limit}
        \lim_{prec \to \infty} \sfunction{IntervalInvCDF}_{\mu, \beta}(\langle a, b \rangle, prec) = [F_{\LD}^{-1}(a), F_{\LD}^{-1}(b)]
\end{equation}

\cref{alg:lap} starts with the $[0,1]$ interval. On each iteration, it picks one half of this interval uniformly at random and computes the inverse CDF (\cref{alg:lap:intervalInvCDF}) on it to obtain an interval $[s, t]$.  The precision is increased in each iteration to ensure that even  samples near a boundary between two floats can eventually be distinguished.
The algorithm terminates when $s$ and $t$ both round up to the same floating point number. The check for termination happens on \cref{alg:lap:termination} using the \sfunction{NextFloat} function, which takes as input an arbitrary-precision floating point number and rounds it to the next highest 64-bit number.
\cref{fig:laplace} visualizes a run of this algorithm.

\begin{algorithm}
\caption{\textproc{SampleLaplace}$(\mu, \beta)$}
\label{alg:lap}
\begin{algorithmic}[1]
\State $a, b \gets \langle0,1\rangle$
\State $prec\gets 0$
\While{True}
\If {$\sfunction{RandBit}() = 0$}
\State $a \gets \frac{a+b}{2}$
\Else
\State $b \gets \frac{a+b}{2}$
\EndIf
\State $prec\gets prec+1$
\State $\langle s,t\rangle \gets \sfunction{IntervalInvCDF}_{\mu, \beta}(\langle a, b\rangle, prec)$ \Comment{For any $r\in [a, b]$, $F_{\LD}^{-1}(r) \in [s,t] $} \label{alg:lap:intervalInvCDF}
\If {$\sfunction{NextFloat}(s) = \sfunction{NextFloat}(t)$} \label{alg:lap:termination}
\State \Return $\sfunction{NextFloat}(s)$ 
\EndIf
\EndWhile
% \State \textbf{return} $b$
\end{algorithmic}
\end{algorithm}

\def\P{\mathcal{D}_{\text{\sfunction{SampleLaplace}}({\mu,\beta})}}
\def\Pk{\mathcal{D}_{\text{\sfunction{SampleLaplace}}({\mu,\beta,k})}}
\def\Q{\mathcal{D}^{\float}_{\mu,\beta}}

We now formally state the claim that \cref{alg:lap} is arbitrarily close to sampling from a Laplace and rounding to the next float.
Let $\Pk$
be the probability distribution over the outputs of Algorithm~\cref{alg:lap} on inputs $\mu$, $\beta$ after running for at most $k$ iterations; we define the output to be $\bot$ if the algorithm hasn't terminated after $k$ rounds.
 Let $\Q$ be the probability distribution of $\float(X)$ where $X \sim \mathcal{D}_{\mu,\beta}$.

\begin{restatable}{theorem}{main}
\label{thm:algorithm-arbitrarily-close-to-true-distribution}
For any $\mu$, $\beta > 0$ where $\mu$ and $\beta$ are arbitrary-precision floats,
%$$\mathsf{TVD}(\P, \Q) \leq \delta,$$
\begin{equation*}
  \lim_{k \rightarrow \infty} \mathsf{TVD}(\Pk, \Q) = 0
\end{equation*}
where $\mathsf{TVD}$ is the total variation distance.
\end{restatable}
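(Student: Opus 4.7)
The plan is to couple the execution of Algorithm~\ref{alg:lap} with a single random variable $U \sim \mathrm{Uniform}(0,1)$ by interpreting the sequence of bits produced by $\sfunction{RandBit}()$ as a binary expansion of $U$; this arranges that $U \in [a_k, b_k]$ holds after every iteration $k$. Under this coupling, whenever the algorithm terminates its output becomes a deterministic function of $U$. Since inverse transform sampling gives $F_{\LD}^{-1}(U) \sim \LD$, we have $\float(F_{\LD}^{-1}(U)) \sim \Q$. I would therefore compare the algorithm's output to $\float(F_{\LD}^{-1}(U))$ and bound the total variation distance by the probability of non-termination within $k$ iterations.

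For correctness conditional on termination, I would combine the defining property of $\sfunction{IntervalInvCDF}$ --- which ensures $F_{\LD}^{-1}(U) \in [s_k, t_k]$ whenever $U \in [a_k, b_k]$ --- with the monotonicity of $\sfunction{NextFloat}$: if the termination check $\sfunction{NextFloat}(s_k) = \sfunction{NextFloat}(t_k) = s$ holds, then by sandwiching $\float(F_{\LD}^{-1}(U)) = s$ as well, matching the algorithm's output. Since the algorithm reports $\bot$ only on the non-termination event, this yields
\begin{equation*}
\mathsf{TVD}(\Pk, \Q) \leq \Pr[\text{Algorithm~\ref{alg:lap} has not terminated within } k \text{ iterations}],
\end{equation*}
reducing the theorem to showing that this non-termination probability tends to $0$.

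The main obstacle is the termination analysis, which I would handle in two steps via an \emph{idealized} variant of the algorithm that uses the exact image $[F_{\LD}^{-1}(a_k), F_{\LD}^{-1}(b_k)]$ in place of $[s_k, t_k]$. First, let $N$ denote the countable, hence Lebesgue-null, set consisting of $\{0, 1\}$ together with $\{F_{\LD}(\sigma) : \sigma \text{ a finite 64-bit float}\}$. For $U \notin N$, the value $F_{\LD}^{-1}(U)$ lies strictly inside $\float^{-1}(\sigma)$ for a unique float $\sigma$; since $F_{\LD}^{-1}$ is continuous on $(0,1)$ and the dyadic intervals around $U$ shrink to $\{U\}$, the ideal algorithm terminates after finitely many iterations for almost every $U$. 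Second, I would transfer this to Algorithm~\ref{alg:lap}: once the ideal algorithm would terminate at some iteration $k^\star$ with $[F_{\LD}^{-1}(a_{k^\star}), F_{\LD}^{-1}(b_{k^\star})] \subseteq \float^{-1}(\sigma)$, the same containment holds for every nested subinterval, and by Equation~(\ref{equ:interval-cdf-precision-in-limit}) the approximating interval $[s_j, t_j]$ converges to the exact image as $prec = j \to \infty$. Hence $[s_j, t_j] \subseteq \float^{-1}(\sigma)$ for all sufficiently large $j$, so Algorithm~\ref{alg:lap} terminates almost surely, and the non-termination probability tends to $0$ by monotone convergence, completing the proof.
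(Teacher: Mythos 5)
Your proof is correct in spirit and takes a genuinely different route from the paper. Where the paper introduces auxiliary distributions $P^{(k)}_{init}$ and $P^{(k)}_{F^{-1}}$ over the intervals visited by the algorithm, proves their pairwise disjointness (Claim~\ref{clm:algorithm-intervals-disjoint}) and correctness of their masses (Claim~\ref{clm:algorithm-intervals-correct-probability}), and from this derives the one-sided domination $P^{(k)}(s) \leq Q(s)$ followed by an explicit equality $\mathsf{TVD}(P^{(k)}, Q) = P^{(k)}(\bot)$, you sidestep all of that bookkeeping with a single coupling: realizing the bit stream as the binary expansion of $U \sim \mathrm{Uniform}(0,1)$ lets you observe directly that the algorithm's output equals $\float(F_{\LD}^{-1}(U))$ whenever it terminates, so the coupling inequality gives $\mathsf{TVD}(\Pk, \Q) \leq \Pr[\bot]$ immediately. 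This is cleaner and makes the "minimal error'' claim transparent; the paper's route is more laborious but yields the sharper equality and an explicit rate $\alpha(k)$, whereas your almost-sure termination argument via the Lebesgue-null set $N$ gives only qualitative convergence (via continuity from above of measure, rather than "monotone convergence'' --- a minor terminology point).

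One step deserves scrutiny: the transfer from the idealized algorithm to the actual one. You invoke Equation~(\ref{equ:interval-cdf-precision-in-limit}) to conclude that $[s_j, t_j] \subseteq \float^{-1}(\sigma)$ for large $j$, but that equation is a pointwise limit for a \emph{fixed} input interval $\langle a, b\rangle$ as $prec \to \infty$, whereas in the algorithm the input interval $[a_j, b_j]$ shrinks as $j$ grows. To close this, you need either (i) an inclusion-monotonicity property of $\sfunction{IntervalInvCDF}$ at fixed precision --- $[a',b'] \subseteq [a,b] \Rightarrow \tilde F^{-1}_{j}([a',b']) \subseteq \tilde F^{-1}_{j}([a,b])$, which is natural for ball arithmetic and lets you bound $[s_j,t_j] \subseteq \tilde F^{-1}_{j}([a_{k^\star}, b_{k^\star}])$ with $[a_{k^\star}, b_{k^\star}]$ fixed, so Equation~(\ref{equ:interval-cdf-precision-in-limit}) applies --- or (ii) a uniform error bound $\delta(prec) \to 0$ over all candidate intervals. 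The paper's own Claim~\ref{clm:probability-non-termination} quietly assumes exactly such a uniform $\delta(k)$ when it extracts it "from Equation~(\ref{equ:interval-cdf-precision-in-limit})'', so this is not a defect specific to your approach, but you should make the assumption explicit rather than citing the pointwise limit as though it sufficed.
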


Proof of this theorem appears in Appendix~\ref{sec:proof-main-theorem}.

This algorithm can be the basis for a variant of the Laplace mechanism~\cite{dwork2014the-algorithmic}, which noisily evaluates some target function $f$ on the private data $x$. To do so, it is essential to invoke $\sfunction{SampleLaplace}$ with $\mu=f(x)$ rather than invoking it with $\mu=0$ and adding the result to $f(x)$. The precision-based attacks of~\cref{sec:existing-vulnerabilities} illustrate one of the potential vulnerabilities with the latter approach.

\begin{figure}[t!]
  \centering
  \includegraphics[width=.85\columnwidth]{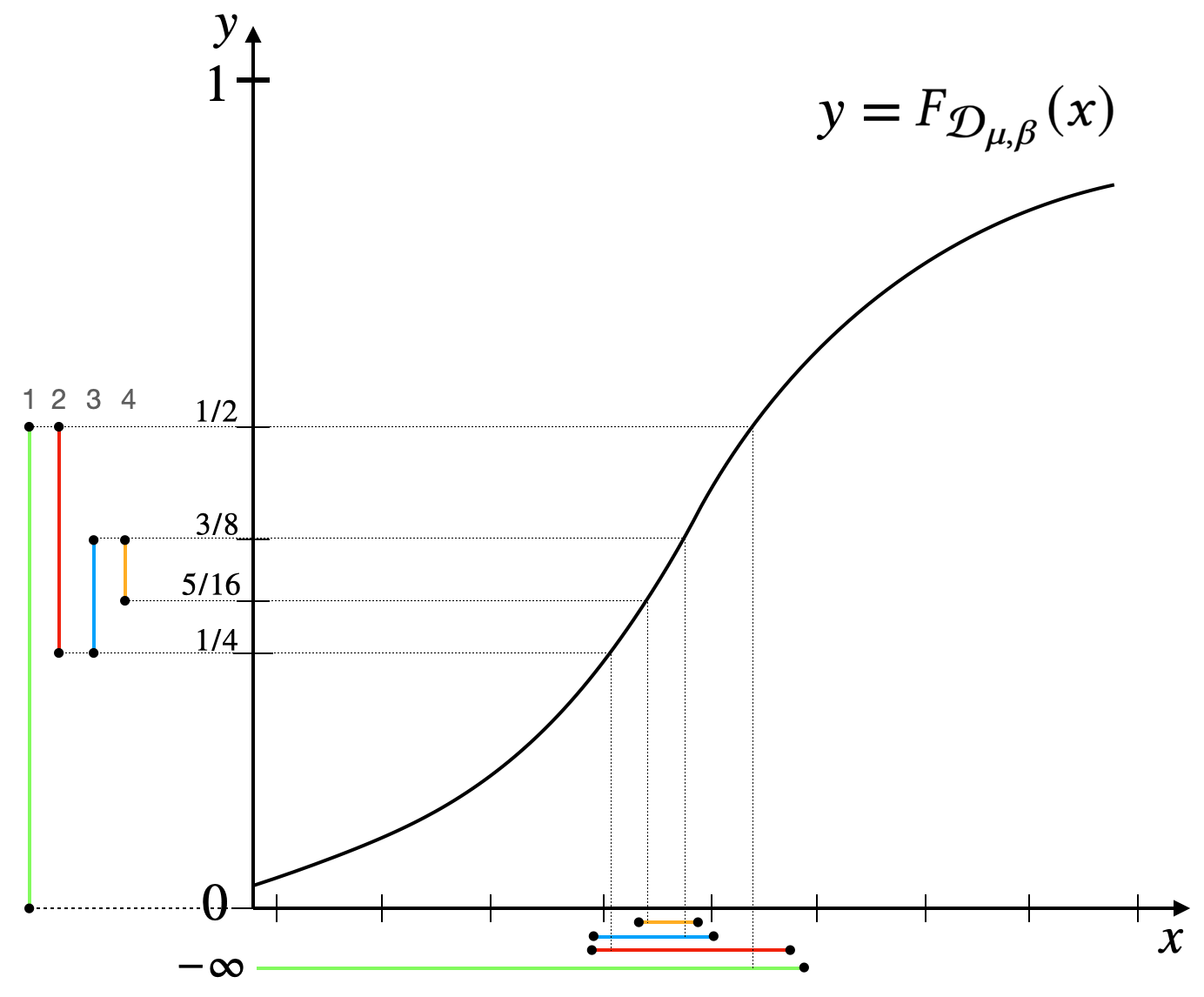}
  \caption{A visualization of \cref{alg:lap}. The green, red, blue and orange lines correspond to the four iterations before the algorithm terminates, and the ticks along the x-axis represent floating point numbers (not drawn to scale). The colored lines below the x-axis represent intervals output by \sfunction{IntervalInvCDF} on the corresponding input intervals $(a, b)$ shown on the y-axis.  Note that each interval contains $\set{F^{-1}_{\LD}(x) \mid a\le x\le b }$, shown by the sub-interval within the dotted lines but may be slightly wider due to the approximation error. 
  It is this approximation that prevents the algorithm from terminating in the third iteration (corresponding to the blue intervals).
  By the fourth iteration (orange intervals), the output interval lies completely between two consecutive floating point numbers which suffices for termination on \cref{alg:lap:termination}.  
  }
  \label{fig:laplace}
\end{figure}

We have implemented \cref{alg:lap} in the Tumult platform~\cite{TumultPlatform}.
Our implementation is in Python.  All steps in the algorithm can be computed \emph{exactly} with the exception of $\sfunction{IntervalInvCDF}$, which is approximate in that it returns an interval that may be \emph{wider} than the ``true'' interval (i.e., $[F_{\mathcal{D}}^{-1}(a), F_{\mathcal{D}}^{-1}(b)]$).  Nevertheless, for the correctness of the algorithm, the returned interval must be guaranteed to contain the true interval.  

To ensure this guarantee, the implementation relies on Arb~\cite{Johansson2017arb}, a C library for arbitrary-precision ball arithmetic.  Ball arithmetic enables computing with real numbers by explicitly and automatically tracking error bounds throughout the computation.  With ball arithmetic, a function $f: \mathcal{X} \rightarrow \mathcal{Y}$ that ordinarily takes a number and returns a number is implemented as a function $F$ that takes in an interval (represented by a midpoint and radius, a ``ball'') and returns an interval with the property that the returned interval contains the correct answer for any number in the input interval (i.e., $\forall x \in [x_1, x_2]: f(x) \in F([x_1, x_2])$).  This correctness guarantee composes nicely and allows us to construct a function like $\sfunction{IntervalInvCDF}$ from the base functions in Arb and be confident the interval it returns contains the true interval.

Arb allows one to increase the working precision which generally speaking reduces the size of the resulting interval, though this is not a universal guarantee and depends on the computation.  For the inverse CDF of the Laplace distribution, which only involves basic arithmetic and a log function, we believe, based on the Arb documentation, that the error decreases exponentially with working precision (cf. Sec. 2.3.6 of ~\cite{arblib-documentation}).

A difference between the implementation in the Tumult platform and \cref{alg:lap} is that instead of sampling a single bit per iteration, we sample 63 bits, thereby choosing among $2^{63}$ equiprobable intervals in each iteration rather than just two.  This improves the runtime of the algorithm by a factor of $\approx$ 40x. 

We simulated ~20 million samples.  The throughput is ~30,000 samples per second.  Almost all samples terminate in a single iteration, but 2\% ($\approx$400,000) took two iterations, and we never witnessed a sample requiring 3 or more iterations.

\section{Extensions, limitations, and future work}
\label{sec:extensions}

While in this paper we describe a technique for safely sampling from a target distribution and rounding to a float, it should be cautioned that this cannot necessarily be used as a drop in replacement in more complex algorithms that rely on sampling from specific distributions as a subroutine.  For some of these algorithms (e.g., the Sparse Vector Technique~\cite{dwork2014the-algorithmic}, PrivTree~\cite{zhang2016privtree}), the proof of correctness leverages properties of the target distribution that may not hold for its rounded variant.

Nevertheless, we believe that our technique, interval refining, can be usefully extended to some of these more complex algorithms. 
To illustrate this point, we briefly describe how it can be used for sampling a noisy argmax, a subroutine that can be used inside the exponential mechanism~\cite{mcsherry2007mechanism} (via the Gumbel-max trick~\cite{GumbelMax}) or as a mechanism itself (e.g., the \sfunction{Report Noisy Max} function~\cite{dwork2014the-algorithmic}).
The noisy argmax problem is as follows: given $n$ private values $x_1, \dots, x_n$, report $\argmax_i x_i + Z_i$ where $Z_i$ is noise sampled independently from some target distribution (e.g., Gumbel).  Note that we cannot simply sample \emph{rounded} noise values as this would not be equivalent.  
But we can still use the interval refining technique: we maintain intervals around the noisy value of each element and we terminate once we have found the largest noisy value (i.e., there is one interval that is strictly larger than all others).   The interval refining techniques gives us flexibility to chooose different termination conditions to suit the particular application.

We are in the process of implementing the interval refining technique for the exponential mechanism~\cite{mcsherry2007mechanism} and an algorithm for noisy quantiles~\cite{smith2011privacy}.  A formal description of the algorithm and a proof of its correctness are left as future work.

A limitation of the interval refinement technique is that the runtime can be difficult to analyze as the number of iterations required depends on the shape of the inverse CDF and the approximation technique used to calculate it.  In practice, with our implementation of the Laplace mechanism, we never witnessed a sample fail to reach the termination condition, with 98\% terminating after a single iteration. A more careful characterization of the runtime is the subject of future work.

\balance

\bibliographystyle{plain}
\bibliography{ref}

\begin{thebibliography}{10}

\bibitem{Chorus}
Chorus.
\newblock \url{https://github.com/uvm-plaid/chorus}.
\newblock Accessed 2022-04-13.

\bibitem{diffpriv}
{diffpriv}: easy differential privacy in {R}.
\newblock \url{https://github.com/brubinstein/diffpriv}.
\newblock Accessed 2022-04-13.

\bibitem{diffprivlib}
Diffprivlib: the {IBM} differential privacy library.
\newblock \url{https://github.com/IBM/differential-privacy-library}.
\newblock Accessed 2022-04-13.

\bibitem{OpenDPBugReport}
Floating-point issue in noise samplers.
\newblock \url{https://github.com/opendp/opendp/issues/414}.
\newblock Accessed 2022-04-13.

\bibitem{MPFR}
The {GNU} {MPFR} library.
\newblock \url{https://mpfr.org}.
\newblock Accessed 2022-04-13.

\bibitem{GoogleDP}
Google's differential privacy libraries.
\newblock \url{https://github.com/google/differential-privacy}.
\newblock Accessed 2022-04-13.

\bibitem{WikipediaIEEE}
{IEEE} 754.
\newblock \url{https://en.wikipedia.org/wiki/IEEE_754#Rounding_rules}.
\newblock Accessed 2022-04-13.

\bibitem{InverseTransformSampling}
Inverse transform sampling.
\newblock \url{https://en.wikipedia.org/wiki/Inverse_transform_sampling}.
\newblock Accessed 2022-05-02.

\bibitem{OpenDP}
{OpenDP}.
\newblock \url{https://github.com/opendp/opendp}.
\newblock Accessed 2022-04-13.

\bibitem{OpenDPsamplers}
samplers.rs -- opendp.
\newblock
  \url{https://github.com/opendp/opendp/blob/bf0456891805ea68c99ba30be226cf7067c3af22/rust/src/samplers.rs#L543}.
\newblock Accessed 2022-04-13.

\bibitem{SmartNoiseCore}
{SmartNoise} core.
\newblock \url{https://github.com/opendp/smartnoise-core}.
\newblock Accessed 2022-04-13.

\bibitem{SmartNoiseSQL}
{SmartNoise} {SQL}.
\newblock \url{https://github.com/opendp/smartnoise-sdk/tree/main/sql}.
\newblock Accessed 2022-04-13.

\bibitem{TumultPlatform}
The tumult platform.
\newblock \url{https://tmlt.io/platform}.
\newblock Accessed 2022-04-13.

\bibitem{GumbelMax}
Ryan Adams.
\newblock The gumbel-max trick for discrete distributions.
\newblock \url{https://lips.cs.
  princeton.edu/the-gumbel-max-trick-for-discrete-distributions/}.
\newblock Accessed 2022-05-02.

\bibitem{dwork2014the-algorithmic}
Cynthia Dwork and Aaron Roth.
\newblock The algorithmic foundations of differential privacy.
\newblock {\em Found. Trends Theor. Comput. Sci.}, 9(3–4):211–407, aug
  2014.

\bibitem{goldberg1991what}
David Goldberg.
\newblock What every computer scientist should know about floating-point
  arithmetic.
\newblock {\em ACM Comput. Surv.}, 23(1):5–48, mar 1991.

\bibitem{GoogleSecureNoise}
{Google Differential Privacy Team}.
\newblock Secure noise generation.
\newblock
  \url{https://github.com/google/differential-privacy/blob/main/common_docs/Secure_Noise_Generation.pdf}.
\newblock Accessed 2022-04-13.

\bibitem{HolohanBraghin.SecureRandomSampling.2021}
Naoise Holohan and Stefano Braghin.
\newblock Secure {{Random Sampling}} in {{Differential Privacy}}.
\newblock In Elisa Bertino, Haya Shulman, and Michael Waidner, editors, {\em
  Computer {{Security}} \textendash{} {{ESORICS}} 2021}, Lecture {{Notes}} in
  {{Computer Science}}, pages 523--542, {Cham}, 2021. {Springer International
  Publishing}.

\bibitem{ilvento2020implementing}
Christina Ilvento.
\newblock Implementing the exponential mechanism with base-2 differential
  privacy.
\newblock In {\em Proceedings of the 2020 ACM SIGSAC Conference on Computer and
  Communications Security}, pages 717--742, 2020.

\bibitem{jin2021we}
Jiankai Jin, Eleanor McMurtry, Benjamin~IP Rubinstein, and Olga Ohrimenko.
\newblock Are we there yet? {T}iming and floating-point attacks on differential
  privacy systems.
\newblock {\em arXiv preprint arXiv:2112.05307}, 2021.

\bibitem{arblib-documentation}
F.~Johansson.
\newblock Arb - a c library for arbitrary-precision ball arithmetic.
\newblock \url{https://arblib.org/arb.pdf}.
\newblock Accessed 2022-05-02.

\bibitem{Johansson2017arb}
F.~Johansson.
\newblock Arb: efficient arbitrary-precision midpoint-radius interval
  arithmetic.
\newblock {\em IEEE Transactions on Computers}, 66:1281--1292, 2017.

\bibitem{mcsherry2007mechanism}
Frank McSherry and Kunal Talwar.
\newblock Mechanism design via differential privacy.
\newblock In {\em 48th Annual IEEE Symposium on Foundations of Computer Science
  (FOCS'07)}, pages 94--103. IEEE, 2007.

\bibitem{Mironov.SignificanceLeastSignificant.2012}
Ilya Mironov.
\newblock On significance of the least significant bits for differential
  privacy.
\newblock In {\em Proceedings of the 2012 {{ACM}} Conference on {{Computer}}
  and Communications Security}, {{CCS}} '12, pages 650--661, {New York, NY,
  USA}, October 2012. {Association for Computing Machinery}.

\bibitem{smith2011privacy}
Adam Smith.
\newblock Privacy-preserving statistical estimation with optimal convergence
  rates.
\newblock In {\em Proceedings of the forty-third annual ACM symposium on Theory
  of computing}, pages 813--822, 2011.

\bibitem{zhang2016privtree}
Jun Zhang, Xiaokui Xiao, and Xing Xie.
\newblock Privtree: A differentially private algorithm for hierarchical
  decompositions.
\newblock In {\em Proceedings of the 2016 International Conference on
  Management of Data}, SIGMOD '16, page 155–170, New York, NY, USA, 2016.
  Association for Computing Machinery.

\end{thebibliography}

\onecolumn

\appendix

\section{Further details on precision-based attacks}
\label{sec:attack-details}

In this section, we provide additional details on the precision-based attacks described in \cref{sec:existing-vulnerabilities}.

We consider double-precision floating-point numbers as specified by IEEE 754~\cite{WikipediaIEEE}, called \emph{doubles} for short.
A double uses 64 bits: 1 sign bit $s$, 11 exponent bits $e$, and 52 mantissa bits $d_1 \dots d_{52}$; the corresponding floating-point number is:
\[
    {\left(-1\right)}^s \cdot {\left(1.d_1 \dots d_{52}\right)} \cdot 2^{e-1023}.
\]
An exponent value of $e=0$ is used to represent 0 and subnormal numbers, while an exponent value of $e=2047$ is used to represent infinities and NaN values; we will ignore these edge cases.
In IEEE 754, arithmetic operations between doubles are \emph{correctly rounded}: they must be computed exactly, and rounded to the closest double.
(In case the result of an operation falls exactly between two successive doubles, special rounding rules apply.)
In the rest of this paper, we denote by $\oplus$, $\ominus$, and $\otimes$ the floating-point analogues of addition, substraction, and multiplication.

The \emph{unit in the last place}, or \emph{ulp}, of a double $x$ is the size of the interval between $x$ and the next consecutive double. We denote it as $\ulp_x$.
For example, the ulp of $x=1$ (represented by $s=0$, $e=1023$, and $d_i=0$ for all $i$) is $2^{-52}$, while the ulp of $x=1-2^{-53}$ (represented by $s=0$, $e=1022$, and $d_i$=1 for all $i$) is $2^{-53}$.
The first fact in \cref{sec:existing-vulnerabilities} follows directly from the definition of doubles: if a double $x$ is such that $2^k \le x < 2^{k+1}$, then $\ulp_x=2^{k-52}$.

Let us now formalize and prove the second observation.

\begin{theorem}\label{thm:sumprecision}
Let $x$ and $y$ be two doubles, with $x \neq 0$.
Then $x \oplus y$ is a multiple of $\ulp_x/2$.
\begin{proof}
Let us assume, without loss of generality, that $x>0$.
Let $k$ be such that of $2^k \le x < 2^{k+1}$; note that $\ulp_x = 2^{k-52}$.
There are two cases, based on whether $y < -2^{k-1}$.
\begin{itemize}
\item If $y < -2^{k-1}$, then $\ulp_y$ is at least $2^{k-1-52}=\ulp_x/2$, so $y$ is a multiple of $\ulp_x/2$.
Since $x$ is a multiple of $\ulp_x$, it's also a multiple of $\ulp_x/2$, and $x \oplus y$ is a multiple of $\ulp_x/2$.
\item If $y \ge -2^{k-1}$, then $x \oplus y \ge - 2^{k-1} + 2^k = 2^{k-1}$.
Then $x \oplus y$ has a ulp of at least $2^{k-1-52}=\ulp_x/2$, so regardless of the value of $y$, the sum will be a multiple of $\ulp_x/2$.
\end{itemize}
\end{proof}
\end{theorem}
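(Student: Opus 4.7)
The plan is to reduce to the case $x > 0$ by the symmetry $(x, y) \mapsto (-x, -y)$, which commutes with $\oplus$ in IEEE 754 and preserves $\ulp_x$, and then fix the unique integer $k$ with $2^k \le x < 2^{k+1}$, so that $\ulp_x = 2^{k-52}$ and the goal becomes: $x \oplus y$ is a multiple of $2^{k-53}$. I would split on whether $y < -2^{k-1}$ or $y \ge -2^{k-1}$, the threshold being chosen so that above it the exact real sum is pinned to magnitude at least $2^{k-1}$, and below it the summand $y$ itself already has precision at least $\ulp_x/2$.

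In the case $y \ge -2^{k-1}$, the exact sum satisfies $x + y \ge 2^k - 2^{k-1} = 2^{k-1}$; IEEE correct rounding sends any real value that is at least $2^{k-1}$ to a double of magnitude at least $2^{k-1}$, and every such double has ulp at least $2^{k-1-52} = \ulp_x/2$, hence is automatically a multiple of $\ulp_x/2$. This subcase requires no information about $y$ beyond the case hypothesis.

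In the case $y < -2^{k-1}$, we have $|y| > 2^{k-1}$, so $\ulp_y \ge 2^{k-1-52} = \ulp_x/2$, and $y$ is itself a multiple of $\ulp_x/2$. Since $x$ is a multiple of $\ulp_x$ (hence of $\ulp_x/2$), the exact real sum $x + y$ is a multiple of $\ulp_x/2$. To transfer this alignment to the rounded value, I would consider two sub-cases based on $|x + y|$: if $|x + y| \ge 2^{k-1}$, the binade argument of the first case applies directly to $x \oplus y$; if $|x + y| < 2^{k-1}$, the local double grid around $x + y$ has spacing that divides $\ulp_x/2$, so $x + y$ is already exactly representable (being a multiple of a coarser dyadic step that is divisible by the local ulp), whence $x \oplus y = x + y$ is a multiple of $\ulp_x/2$.

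The main obstacle I anticipate is this last transfer step: the implication ``$x + y$ is a multiple of $\ulp_x/2$, so $x \oplus y$ is'' is not a generic property of rounding and requires a brief comparison of the local double grid at $|x+y|$ against $\ulp_x/2$. I would close the proof by writing this ulp-comparison out explicitly using the representation of doubles as integer multiples of $2^{e-52}$ on each binade $[2^e, 2^{e+1})$, together with the fact that $a \mid b$ for dyadic steps $a \le b$ forces multiples of $b$ to be multiples of $a$.
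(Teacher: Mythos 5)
Your proof follows essentially the same route as the paper's: the same reduction to $x>0$, the same threshold $y < -2^{k-1}$, and the same two arguments (grid alignment of $y$ itself below the threshold, magnitude of the rounded sum above it). The one difference is that in the $y < -2^{k-1}$ case you explicitly justify the passage from ``$x+y$ is a multiple of $\ulp_x/2$'' to ``$x \oplus y$ is a multiple of $\ulp_x/2$'' via the sub-case split on $|x+y|$ (either the result lands in a binade whose ulp is at least $\ulp_x/2$, or the local grid is finer and the exact sum is representable, so rounding is the identity) --- a transfer step the paper's own proof silently elides, and which your argument handles correctly.
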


Precision-based attacks on additive noise mechanisms follow directly from this fact.

\paragraph*{Additive noise mechanisms}
Distinguishing events can be found simply by adding noise to $0$ or $1$, which can be two outputs of a counting query evaluated on two neighboring datasets.
All noisy values obtained from $1$ are multiples of $2^{-53}$, while a large number of noisy values obtained from $0$ are not.
This affects systems which do not attempt to mitigate floating-point vulnerabilities, like Chorus~\cite{Chorus} or diffpriv~\cite{diffpriv}.
More interestingly, the vulnerability also affects diffprivlib (Gaussian, Analytic Gaussian, and Staircase mechanisms, as well as all variants of the Laplace mechanism except the snapping mechanism), SmartNoise Core (Laplace, Gaussian, and Truncated Gaussian mechanisms), and OpenDP (Laplace, Gaussian, and Analytic Gaussian mechanisms), even though these libraries attempt to mitigate floating-point attacks.

Note that in SmartNoise Core and OpenDP, users need to explicitly opt in to using floating-point primitives using a flag, and are warned that doing so is potentially risky.
Therefore, it could be argued that this problem is technically not a vulnerability.
However, at the time of writing of this paper, some tools relying on these libraries set this flag by default, and do not warn users who are using these primitives; this is the case for e.g. SmartNoise SQL~\cite{SmartNoiseSQL}.

\paragraph*{Quantiles mechanisms}
It is a little less straightforward to find distinguishing events for quantile mechanisms based on the exponential mechanism.
Recall that the naive mechanism first splits the output space in intervals based on the data, then choosing an interval $[x,y)$ using the exponential mechanism, sampling a uniform number $r$ in $[0,1)$, and returning $x \oplus (y \ominus x) \otimes r$.

In diffprivlib, $r$ is generated using the \texttt{SystemRandom.random()} function from Python's standard library, which generates multiples of $2^{-53}$.
This means that with $D_1 = [0, 0, 1]$, all outputs will be multiples of $2^{-53}$.
With $D_2 = [0, 0.25, 1]$ however, whenever the interval $[0, 0.25)$ is selected by the exponential mechanism, and the output of SystemRandom.random() is not a multiple of $2^{-51}$, then the returned number will not be a multiple of $2^{-53}$, creating distinguishing events.

SmartNoise Core attempts to prevent vulnerabilities by using MPFR~\cite{MPFR} to generate $r$ in a hole-free way: all possible doubles in $[0,1)$ can be generated.
As a consequence, the previous choice of $D_1$ and $D_2$ does not create distinguishing events.
However, if we use $D_1 = [-1, 1, 1]$, then the only possible sampled interval is $[-1, 1)$: because of the addition with $1$, all outputs will be multiples of $2^{-53}$.
Using $D_2 = [-1, 0, 1]$, the interval $[0, 1)$ might be sampled, in which case the output might not be a multiple of $2^{-53}$.

Interestingly, SmartNoise Core does \emph{not} block the use of this quantiles mechanism on the user-specified floating-point option which is necessary to access floating-point additive noise mechanisms.
This underscores that floating-point vulnerabilities can occur in places where they would not be expected, and thereby evade scrutiny.

\section{Proof of Theorem~\ref{thm:algorithm-arbitrarily-close-to-true-distribution}}
\label{sec:proof-main-theorem}

We begin by recalling the statement of the theorem:

\main*

To simplify notation we let $P^{(k)} = \Pk$ and $Q = \Q$.
Recall that $S$ is the set of rational numbers representable as floats and that $P^{(k)}$ is a distribution over $S \cup \{\bot\}$\footnote{While $\bot$ is not a possible outcome for distribution $Q$, we can consider it as a possible outcome with probability 0, and thus the TVD is well-defined.}, where $s\in S$ is the event that the algorithm terminates in the first $k$ rounds and outputs $s$, and $\bot$ is the event that the algorithm does not terminate after $k$ rounds.
$Q$ is the distribution over $S$ that results from picking a real number from $\mathcal{D}_{\mu,\beta}$ and rounding it to the nearest float.

The proof the theorem has the following structure:
\begin{enumerate}
  \item We show that for all $k \in [1,\infty)$ and for all $s \in S$, $P^{(k)}(s) \leq Q(s)$ (Claim~\ref{clm:probability-lower-bound}).
  \item We show that $\lim_{k \rightarrow \infty} P^{(k)}(\bot) = 0$ (Claim~\ref{clm:probability-non-termination}). That is, the probability that the algorithm does not terminate within $k$ rounds goes to zero as $k$ goes to infinity.
  \item We show that (1) and (2) together imply that the theorem holds.
\end{enumerate}

Before we can prove the first of these claims, we need to show some properties about the intervals produced in the intermediate stages of the algorithm.
We call the interval that the algorithm sets to $[a_{k},b_{k}]$ at the beginning of each iteration the \emph{initial interval} for that iteration.
We let $P_{init}^{(k)}$ denote the probability distribution over choices of initial interval that result in the algorithm terminating in that round.
That is, $P^{(k)}_{init}(I)$ is the probability that the algorithm picks $I$ as its initial interval in some round, and then terminates at the end of the round.
Note that whether the algorithm terminates on a given initial interval is deterministic, so the algorithm always terminates in the first round that it chooses an initial interval in $\supp\left( P^{(k)}_{init} \right)$.
We let $P^{(k)}_{init}(\bot)$ be the probability that the algorithm does not terminate in the first $k$ rounds.
See Figure~\ref{fig:initial-interval-example} for an illustration of the possible initial intervals chosen by the Algorithm.

\begin{figure}[h]
  \centering
  \includegraphics[width=\textwidth]{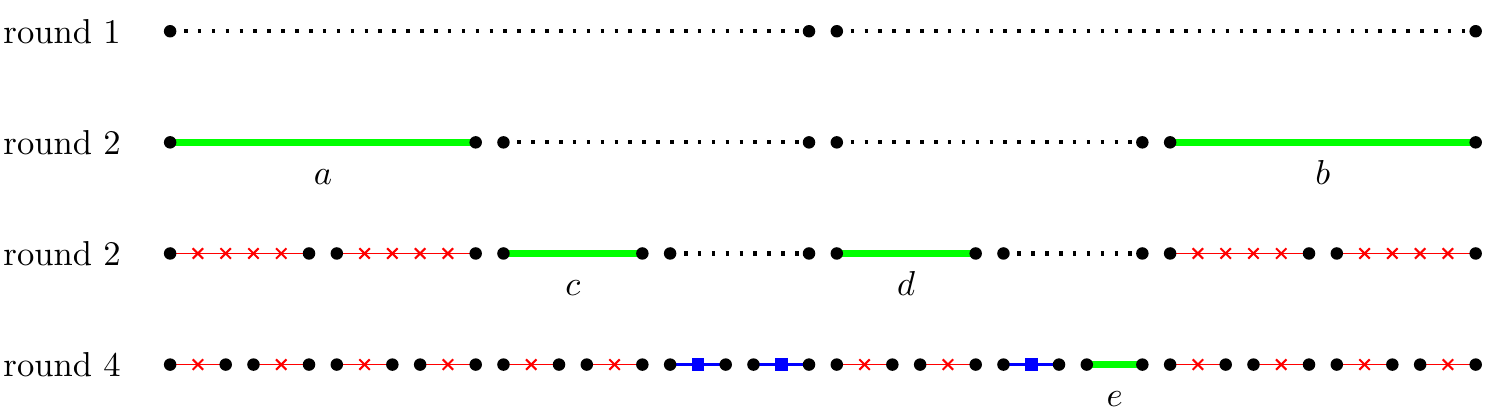}
  \caption{An illustration of the possible initial intervals chosen by the 4-round version of the algorithm. In each round, the algorithm picks one of the two sub intervals of the interval it chose in the previous round. The solid green intervals represent intervals that, if chosen, cause the algorithm to terminate in the given round. Dotted gray intervals represent intervals that could be chosen as the initial interval in a given round, but will never cause the algorithm to terminate. Red intervals with x's represent intervals that can never be chosen by the algorithm because they are sub intervals of the green intervals. Finally, blue intervals marked with squares represent intervals in the last round that may be chosen and cause the algorithm to output $\bot$. The distribution $P^{(4)}_{init}$ is a distribution over intervals $a$ through $e$ and $\bot$, with probabilities $1/4, 1/4, 1/8, 1/8, 1/16$ and $3/16$ respectively. Note that the green intervals are disjoint (see Claim~\ref{clm:algorithm-intervals-disjoint} for a proof).}
  \label{fig:initial-interval-example}
\end{figure}

Next, we want to consider the distributions that result when this initial interval is transformed by the inverse CDF function.
We let $F^{-1}$ denote the interval version of the inverse CDF function for distribution $\mathcal{D}_{\mu,\beta}$.
That is $F^{-1}(I)$ is inverse CDF applied to each endpoint of interval $I$.
Additionally, we let $\tilde{F}^{-1}_{k}$ denote the imprecise interval inverse CDF function with precision $k$, $\sfunction{IntervalInvCDF}_{\mu, \beta}(\cdot, k)$.
We consider two distributions: the one resulting from applying $F^{-1}$ to $P^{(k)}_{init}$, denoted $P^{(k)}_{F^{-1}}$, and the one resulting from applying $\tilde{F}^{-1}_{k}$ to $P^{(k)}_{init}$, denoted $P^{(k)}_{\tilde{F}^{-1}}$ (we drop the subscript $k$ on $\tilde{F}^{-1}$ since it is already present in the superscript).
The probability of $\bot$ in each distribution is the same, i.e. $P^{(k)}_{init}(\bot) = P^{(k)}_{F^{-1}}(\bot) = P^{(k)}_{\tilde{F}^{-1}}(\bot)$.

The following two claims (Claim~\ref{clm:algorithm-intervals-disjoint} and \ref{clm:algorithm-intervals-correct-probability}) help to prove Claim~\ref{clm:probability-lower-bound}.
We first show that intervals in the support of $P^{(k)}_{init}$ are disjoint, and so are the intervals in the support of $P^{(k)}_{F^{-1}}$ (however, the same is not necessarily true of $P^{(k)}_{\tilde{F}^{-1}}$).

\begin{claim}
  \label{clm:algorithm-intervals-disjoint}
  All intervals in $\supp(P^{(k)}_{init})$ are pairwise disjoint.
  Additionally, all intervals in $\supp (P^{(k)}_{F^{-1}})$ are pairwise disjoint.
\end{claim}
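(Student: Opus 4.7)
The plan is to exploit the binary-tree structure induced by the halving process in \cref{alg:lap}. Consider the infinite rooted binary tree $T$ whose root is the interval $[0,1]$ and whose internal node $[a,b]$ has children $[a, (a+b)/2]$ and $[(a+b)/2, b]$. Every initial interval the algorithm could ever choose (under any sequence of coin flips, over any number of rounds) is exactly a node of $T$ at depth equal to the round number. A straightforward structural observation I will state and then use without further argument is that any two nodes of $T$ are either comparable in the ancestor/descendant order or have disjoint interiors; in the dyadic refinement here, ``disjoint'' always means sharing at most a common endpoint, which is measure-zero and is what ``pairwise disjoint'' means in the claim.

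The core of the argument is then to show that no two intervals in $\supp(P^{(k)}_{init})$ can be in ancestor/descendant relation. Suppose $I' \in \supp(P^{(k)}_{init})$ is a strict ancestor of $I \in \supp(P^{(k)}_{init})$. Because each non-root node in $T$ has a unique parent, the only way for the algorithm to reach $I$ as an initial interval in some round is to first select $I'$ as the initial interval in an earlier round and then continue refining. However, by definition of $\supp(P^{(k)}_{init})$, whenever the algorithm reaches $I'$ it terminates at the end of that round (termination is a deterministic function of the initial interval and the working precision used in that round), so the algorithm cannot subsequently reach $I$. This contradicts $I \in \supp(P^{(k)}_{init})$. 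Combining this with the tree dichotomy gives that any two elements of $\supp(P^{(k)}_{init})$ have disjoint interiors, which is the first half of the claim.

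The second half follows by transporting disjointness through $F^{-1}$. Since $\mathcal{D}_{\mu,\beta}$ has a strictly positive density on $\mathbb{R}$, its CDF $F_{\LD}$ is a strictly increasing continuous bijection from $\mathbb{R}$ to $(0,1)$, and hence $F^{-1} = F_{\LD}^{-1}$ is strictly monotone. Therefore, for disjoint (up to endpoints) sub-intervals $[a_1,b_1], [a_2,b_2] \subseteq [0,1]$ with, say, $b_1 \le a_2$, the images $[F^{-1}(a_1), F^{-1}(b_1)]$ and $[F^{-1}(a_2), F^{-1}(b_2)]$ satisfy $F^{-1}(b_1) \le F^{-1}(a_2)$ and are likewise disjoint up to endpoints. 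This is exactly the conclusion for $\supp(P^{(k)}_{F^{-1}})$.

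The main obstacle I anticipate is purely bookkeeping: being careful about what ``initial interval in round $j$'' means across different realizations of the algorithm (so that ``can be reached'' is really the $\supp$ of a well-defined distribution), and being explicit that the unique-path property in $T$ is what forces the algorithm to pass through every ancestor of any interval it reaches. Note that the analogous statement fails for $P^{(k)}_{\tilde{F}^{-1}}$, because $\tilde{F}^{-1}$ may inflate intervals enough that images of disjoint inputs overlap; this is consistent with the claim only asserting disjointness for $P^{(k)}_{init}$ and $P^{(k)}_{F^{-1}}$.
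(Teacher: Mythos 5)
Your proof is essentially the same as the paper's. Both arguments rest on the same three observations: (i) the halving process induces a binary tree of dyadic intervals in which any two nodes are either nested (ancestor/descendant) or essentially disjoint; (ii) termination is a deterministic function of the initial interval (your extra remark that it also depends on $prec$ is harmless, since $prec$ is determined by the round, which is determined by the interval's length); and (iii) if $I' \subsetneq I$ with both in the support, any run reaching $I'$ must first pass through $I$, at which point the algorithm would already have terminated, a contradiction. The transport to $\supp(P^{(k)}_{F^{-1}})$ via strict monotonicity of $F^{-1}$ is also the paper's argument. Your explicit caveat that "disjoint" here means disjoint interiors (the dyadic siblings share an endpoint) is a small but genuine clarification that the paper glosses over, and your parenthetical about why the analogous statement fails for $P^{(k)}_{\tilde{F}^{-1}}$ matches the paper's parenthetical remark.
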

\begin{proof}

  Suppose a pair of intervals $I, I' \in \supp \left(P^{(k)}_{int}\right)$ is overlapping.
  Note that each interval only has the possibility of being chosen by the algorithm on some fixed round (since the length of the intervals the algorithm considers halves every round).
  We let $round(I)$ denote this round for interval $I$.

  Next, we claim that if the algorithm chooses interval $I$ as its initial interval in round $round(I)$, then it must be the case that the algorithm terminates in that round.
  Since the interval is in the support of the distribution, the algorithm must terminate with some positive probability after choosing $I$.
  However, whether the algorithm terminates after choosing an interval in a given round is deterministic, and therefore the algorithm either always terminates for a given interval, or never terminates.

  Finally, note that any two intervals chosen by the algorithm can only intersect if one is strictly contained within the other.
  Assume WLOG that $I' \subseteq I$, and therefore $round(I) < round(I')$.
  On any run of the algorithm that terminates in round $round(I')$ after choosing $I'$, it must be the case that on round $round(I)$, the algorithm chose $I$ as its initial interval.
  Therefore, the algorithm should have terminated in $round(I)$, giving a contradiction.

 The same property holds for $P^{(k)}_{F^{-1}}$ since the inverse CDF function is monotonically increasing, and therefore a pair of intervals in $\supp \left(P^{(k)}_{F^{-1}}\right)$ that the corresponding pair of intervals in $\supp\left(P^{(k)}_{init}\right)$ would overlap.
\end{proof}

\begin{claim}
  \label{clm:algorithm-intervals-correct-probability}
  For all $I \in \supp \left(P^{(k)}_{F^{-1}}\right)$, $I \ne \bot$,
  \begin{equation}
    P^{(k)}_{F^{-1}}(I) = \mathcal{D}_{\mu,\beta}(I).
  \end{equation}
\end{claim}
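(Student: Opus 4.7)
The plan is to unwind the definitions and show that the probability mass $P^{(k)}_{init}$ assigns to an initial interval $I = [a,b] \subseteq [0,1]$ equals its length $b-a$, and then apply the defining property of the inverse CDF to transfer this to Laplace probability on the image interval.

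First I would observe that the algorithm's bit-sampling step is exactly inverse-transform sampling for the uniform distribution on $[0,1]$: after $r$ bits have been sampled, the current interval $[a,b]$ has length $2^{-r}$ and every such dyadic interval at depth $r$ is chosen with probability $2^{-r}$. For any $I \in \supp(P^{(k)}_{init})$ with $I \ne \bot$, Claim~\ref{clm:algorithm-intervals-disjoint} (and its proof) tells us that $I$ is chosen exclusively at some fixed round $r = round(I)$, and that conditional on being chosen, the algorithm deterministically terminates in that round. Hence
\[
P^{(k)}_{init}(I) \;=\; \Pr[\text{algorithm picks } I \text{ in round } r] \;=\; 2^{-round(I)} \;=\; b - a.
\]

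Next, $P^{(k)}_{F^{-1}}$ is obtained by pushing $P^{(k)}_{init}$ forward through $F^{-1}$. Because $F_{\LD}$ is strictly monotone on the support of $\LD$, the map $I \mapsto F^{-1}(I) = [F^{-1}_{\LD}(a), F^{-1}_{\LD}(b)]$ is injective on $\supp(P^{(k)}_{init}) \setminus \{\bot\}$, so no mass is merged and
\[
P^{(k)}_{F^{-1}}\bigl(F^{-1}(I)\bigr) \;=\; P^{(k)}_{init}(I) \;=\; b - a.
\]
On the other hand, by the fundamental property of the CDF,
\[
\mathcal{D}_{\mu,\beta}\bigl([F^{-1}_{\LD}(a),\, F^{-1}_{\LD}(b)]\bigr) \;=\; F_{\LD}\bigl(F^{-1}_{\LD}(b)\bigr) - F_{\LD}\bigl(F^{-1}_{\LD}(a)\bigr) \;=\; b - a.
\]
Combining the two displays gives $P^{(k)}_{F^{-1}}(I') = \mathcal{D}_{\mu,\beta}(I')$ for every $I' \in \supp(P^{(k)}_{F^{-1}})$ with $I' \ne \bot$.

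There is no serious obstacle here; the proof is essentially bookkeeping on top of the disjointness established in Claim~\ref{clm:algorithm-intervals-disjoint}. The only subtle point I would flag explicitly is that one must rule out degenerate cases where $b-a > 0$ but $F^{-1}_{\LD}(a) = F^{-1}_{\LD}(b)$; since the Laplace distribution has a strictly positive density on all of $\mathbb{R}$, $F_{\LD}$ is strictly increasing, so $F^{-1}_{\LD}$ is well-defined on $(0,1)$ and strictly increasing, which validates the two equalities above. The endpoints $a=0$ or $b=1$ (mapping to $\pm\infty$) cannot occur in the support of $P^{(k)}_{init}$ for any $k \ge 1$ because the first iteration already discards one half of $[0,1]$ with probability $1$ and the algorithm only terminates on bounded intervals anyway; this is worth a sentence to make airtight.
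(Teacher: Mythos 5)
Your proof is correct and follows essentially the same route as the paper's: first establishing $P^{(k)}_{init}(J) = |J|$ via the disjointness claim and the fact that each ancestor interval is passed through with probability $1/2$, then transferring this to $\mathcal{D}_{\mu,\beta}$ via the CDF identity $F(F^{-1}(b)) - F(F^{-1}(a)) = b - a$. Your added remarks on injectivity of $F^{-1}$ and the exclusion of the endpoints $0,1$ are minor tightenings of the same argument, not a different approach.
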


\begin{proof}
  First, we claim that for all $J \in \supp\left(P^{(k)}_{init}\right)$, $P^{(k)}_{init}(J) = |J|$.
  Note that for the algorithm to choose interval $J$, it must have chosen the only interval in each of the previous rounds that contains $J$, i.e. there is only one sequence of choices that leads to $J$ being chosen.
  Let $round(J)$ be defined as in the proof of Claim~\ref{clm:algorithm-intervals-disjoint}.
  Let $y = round(J)$, and let $J_{1}, \ldots, J_{y-1}$ be the sequence of initial intervals the algorithm chose in the first $y-1$ rounds.
  Then, the algorithm will never terminate after choosing any $J_{i}$ (by Claim~\ref{clm:algorithm-intervals-disjoint}), and the probability that the algorithm chooses $J_{i}$ given that it chose $J_{i-1}$ is $1/2$.
  The claim follows from induction over $J_{i}$.

  Next, fix $I \ne \bot \in \supp\left(P^{(k)}_{F^{-1}}\right)$.
  Let $J = F(I)$ be the corresponding initial interval that the algorithm chooses.
  Then,
  \begin{align}
    P^{(k)}_{F^{-1}}(I) &= P^{(k)}_{init}(J) \nonumber \\
    &= |J| \nonumber \\
    &= F(I[1]) - F(I[0]) \label{equ:non-interval-cdf}\\
    &= \mathcal{D}_{\mu,\beta}(I). \nonumber
  \end{align}
  The notation $F$ in equation \ref{equ:non-interval-cdf} denotes the non-interval CDF function.
\end{proof}

\begin{claim}
  \label{clm:probability-lower-bound}
  For all $k \in [1,\infty)$ and for all $s \in S$, $P^{(k)}(s) \leq Q(s)$.
\end{claim}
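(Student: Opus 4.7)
The plan is to express $P^{(k)}(s)$ as a sum of probabilities over the initial intervals that lead the algorithm to terminate with output $s$, and then bound that sum using Claims~\ref{clm:algorithm-intervals-disjoint} and \ref{clm:algorithm-intervals-correct-probability}.

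First I would fix $s \in S$ and write
\[
  P^{(k)}(s) = \sum_{J \in T_s^{(k)}} P^{(k)}_{init}(J),
\]
where $T_s^{(k)} \subseteq \supp(P^{(k)}_{init})$ is the set of initial intervals on which the algorithm terminates with output $s$. By the termination condition of \cref{alg:lap} (line~\ref{alg:lap:termination}), $J \in T_s^{(k)}$ exactly when $\tilde{F}^{-1}(J) \subseteq \float^{-1}(s)$. The crucial observation is that the approximate interval always contains the true one: $F^{-1}(J) \subseteq \tilde{F}^{-1}(J)$. Hence for every $J \in T_s^{(k)}$, $F^{-1}(J) \subseteq \float^{-1}(s)$ as well.

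Next I would apply Claim~\ref{clm:algorithm-intervals-correct-probability} to rewrite each summand as $P^{(k)}_{init}(J) = \mathcal{D}_{\mu,\beta}(F^{-1}(J))$, and then use Claim~\ref{clm:algorithm-intervals-disjoint} (applied to $P^{(k)}_{F^{-1}}$) to conclude that the intervals $\{F^{-1}(J) : J \in T_s^{(k)}\}$ are pairwise disjoint. Combining these,
\[
  P^{(k)}(s) = \sum_{J \in T_s^{(k)}} \mathcal{D}_{\mu,\beta}(F^{-1}(J)) = \mathcal{D}_{\mu,\beta}\!\left(\bigcup_{J \in T_s^{(k)}} F^{-1}(J)\right) \leq \mathcal{D}_{\mu,\beta}(\float^{-1}(s)) = Q(s),
\]
where the final equality is just the definition of $Q$ as the pushforward of $\mathcal{D}_{\mu,\beta}$ under $\float$.

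There is no real obstacle here; the claim is essentially a bookkeeping consequence of the earlier lemmas. The only subtlety to make explicit is that the approximation in $\sfunction{IntervalInvCDF}$ only widens the interval, which means the termination criterion is \emph{conservative}: the algorithm may fail to terminate on some initial interval $J$ whose true image $F^{-1}(J)$ already lies entirely within $\float^{-1}(s)$, but it will never erroneously terminate on a $J$ whose true image strays outside $\float^{-1}(s)$. This one-sided error is precisely what forces the inequality to go in the direction $P^{(k)}(s) \leq Q(s)$, with the slack corresponding to the probability mass not yet captured after $k$ rounds (which is accounted for separately by Claim~\ref{clm:probability-non-termination}).
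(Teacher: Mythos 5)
Your proof is correct and follows essentially the same route as the paper: decompose $P^{(k)}(s)$ over the initial intervals that terminate with output $s$, convert each summand to $\mathcal{D}_{\mu,\beta}$-mass via Claim~\ref{clm:algorithm-intervals-correct-probability}, merge the sum into a single probability via the disjointness in Claim~\ref{clm:algorithm-intervals-disjoint}, and bound by $Q(s)$ because the union lies inside $\float^{-1}(s)$. The only cosmetic difference is that you index over initial intervals $J$ while the paper indexes over their images $I = F^{-1}(J)$ under the exact inverse CDF; you also usefully make explicit the containment $F^{-1}(J) \subseteq \tilde{F}^{-1}(J) \subseteq \float^{-1}(s)$, which the paper compresses into ``by definition of $\mathcal{I}_s$.''
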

\begin{proof}
  Let $\mathcal{I}_{s}$ be the subset of $\supp\left(P^{(k)}_{F^{-1}}\right)$ that results in the algorithm producing output $s$.
  That is, if we let $float^{-1}(s)$ be the preimage of float $s$ under function $float$, then $\mathcal{I}_{s} = \left\{I \in \supp\left(P^{(k)}_{F^{-1}}\right) \text{ s.t. } \tilde{F}^{-1}(F(I)) \subseteq float^{-1}(s)\right\}$.
  Then,
  \begin{align*}
    P^{(k)}(s) &= P^{(k)}_{F^{-1}} (\mathcal{I}_{s}) \tag{by definition of $\mathcal{I}_{s}$} \\
    &= \sum_{I \in \mathcal{I}_{s}} P^{(k)}_{F^{-1}} (I) \\
    &= \sum_{I \in \mathcal{I}_{s}} \mathcal{D}_{\mu,\beta}(I) \tag{by Claim~\ref{clm:algorithm-intervals-correct-probability}} \\
    &= \mathcal{D}_{\mu,\beta}\left( \bigcup_{I \in \mathcal{I}_{s}} I \right) \tag{by Claim~\ref{clm:algorithm-intervals-disjoint}} \\
    &\le Q(s). \tag{by definition of $\mathcal{I}_{s}$}
  \end{align*}
\end{proof}

\begin{claim}
  \label{clm:probability-non-termination}
  $\lim_{k \rightarrow \infty} P^{(k)}(\bot) = 0$. That is, the probability that the algorithm does not terminate within $k$ rounds goes to zero as $k$ goes to infinity.
\end{claim}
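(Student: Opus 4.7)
The plan is to reduce the claim to showing that $\sfunction{SampleLaplace}$ terminates almost surely. Since $P^{(k)}(\bot) = \Pr[T > k]$ for the termination time $T$ of the algorithm, and the events $\{T > k\}$ decrease to $\{T = \infty\}$, continuity of measure gives $P^{(k)}(\bot) \to \Pr[T = \infty]$, so it suffices to show $\Pr[T = \infty] = 0$.

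First I would couple the stream of $\sfunction{RandBit}$ outcomes with a single uniform random variable $U \in [0,1]$ so that $[a_k, b_k]$ is exactly the dyadic interval of length $2^{-k}$ containing $U$; in particular $a_k, b_k \to U$ for every realization. Let $B \subseteq [0,1]$ be the set of $u$ for which $F^{-1}_{\LD}(u)$ either exceeds the representable float range or lies on a boundary between the preimages of two adjacent 64-bit floats under $\sfunction{NextFloat}$. Because there are only countably many such boundaries and $F^{-1}_{\LD}$ is strictly monotone and continuous on $(0,1)$, $B$ is at most countable and therefore has Lebesgue measure $0$. For any $U \notin B$, letting $s^* = \sfunction{NextFloat}(F^{-1}_{\LD}(U))$ and $(p_-, p_+]$ its preimage under $\sfunction{NextFloat}$, the value $F^{-1}_{\LD}(U)$ lies strictly in the interior of this preimage, so some $\delta > 0$ satisfies $(F^{-1}_{\LD}(U) - \delta, F^{-1}_{\LD}(U) + \delta) \subseteq (p_-, p_+)$.

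Next, by continuity of $F^{-1}_{\LD}$ and $a_k, b_k \to U$, for all sufficiently large $k$ the true interval $[F^{-1}_{\LD}(a_k), F^{-1}_{\LD}(b_k)]$ is contained in $(F^{-1}_{\LD}(U) - \delta/2, F^{-1}_{\LD}(U) + \delta/2)$. Combining this with \eqref{equ:interval-cdf-precision-in-limit} and an explicit bound on the shrinkage rate of the approximation (e.g.\ the exponential-in-precision bound noted in the Arb discussion of Section~\ref{sec:safe-laplace-mechanism}), for all sufficiently large $k$ the extra slack introduced by $\sfunction{IntervalInvCDF}_{\mu, \beta}(\langle a_k, b_k \rangle, k)$ is also less than $\delta/2$ on each side. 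Consequently the returned endpoints $\langle s_k, t_k \rangle$ both eventually lie in $(p_-, p_+)$, so $\sfunction{NextFloat}(s_k) = \sfunction{NextFloat}(t_k) = s^*$ and the algorithm terminates. Hence termination occurs for every $U \in [0,1] \setminus B$, a probability-$1$ event, completing the proof.

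The main obstacle is the step in which the approximation error must shrink at a controlled rate along the particular sequence of intervals $[a_k, b_k]$ that the algorithm actually visits: property \eqref{equ:interval-cdf-precision-in-limit} on its own only provides pointwise convergence for a fixed interval. Making this step fully rigorous requires either strengthening the specification of $\sfunction{IntervalInvCDF}$ to a uniform-on-bounded-sets error bound $\epsilon(prec) \to 0$, or appealing directly to Arb's correctness guarantees, which propagate an exponentially small per-operation error through the finitely many basic arithmetic and log operations used to evaluate the Laplace inverse CDF. Everything else in the argument is standard measure-theoretic bookkeeping.
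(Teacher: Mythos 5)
Your proof takes a genuinely different route from the paper's. The paper bounds $P^{(k)}(\bot)$ directly by a counting argument on round-$k$ initial intervals: it partitions the non-terminating intervals $I$ at round $k$ into those whose true image $F^{-1}(I)$ contains a float, those for which exactly one endpoint of $F^{-1}(I)$ lies within $\delta(k)$ of a float, and those for which both endpoints do, then bounds each class by $2^{-k}|S|$, $2^{1-k}|S|$, and $\Omega_{\mathcal{D}_{\mu,\beta}}(2\delta(k)|S|)$ respectively. You instead couple the bit stream with a single uniform $U$, argue almost-sure termination for $U$ outside a null set of boundary points by a local $\delta$-ball argument, and conclude by continuity of measure. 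Your decomposition is structurally cleaner and more idiomatic probability; the paper's gives an explicit (if loose) rate at each $k$. Both are valid ways to establish the limit.

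Both arguments share the same unstated strengthening of Equation~\eqref{equ:interval-cdf-precision-in-limit}, and you deserve credit for flagging it. Equation~\eqref{equ:interval-cdf-precision-in-limit} is pointwise in the input interval $\langle a,b\rangle$ with $prec \to \infty$, whereas both proofs feed a \emph{different} interval at each precision $k$. The paper silently upgrades this to a uniform $\delta(k) \to 0$ over all intervals considered at round $k$ (the sentence ``Equation~\eqref{equ:interval-cdf-precision-in-limit} implies that there is a function $\delta(k)$\dots'' does not actually follow from~\eqref{equ:interval-cdf-precision-in-limit} alone). Your explicit call-out — that one needs a uniform or rate-controlled error bound, supplied either by strengthening the spec of $\sfunction{IntervalInvCDF}$ or by appealing to Arb's per-operation error guarantees — is the correct diagnosis of what both proofs are really assuming.

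One small error to repair: you place in $B$ the set of $u$ for which $F^{-1}_{\LD}(u)$ ``exceeds the representable float range.'' If you mean $|F^{-1}_{\LD}(u)|$ larger than the largest finite double, that set has \emph{positive} Lebesgue measure (it is a tail of the uniform), so $B$ would not be null. But this exclusion is unnecessary: the preimage of $\pm\infty$ under $\sfunction{NextFloat}$ is an unbounded interval, so any such $F^{-1}_{\LD}(u)$ is still interior to a single preimage and your $\delta$-ball argument goes through, with the algorithm terminating on $\pm\infty$. Only $u \in \{0,1\}$ and the countably many preimages under $F^{-1}_{\LD}$ of the float boundaries belong in $B$; with that correction $B$ is indeed countable and the argument stands.
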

\begin{proof}
  We bound the probability that the algorithm does not terminate in $k$ rounds, given that the algorithm reached round $k$.
  This is an upper bound on the probability that the algorithm does not terminate in $k$ rounds, $P^{(k)}(\{\bot\})$.

  Let $I$ denote the algorithm's choice of an initial interval for round $k$.
  The algorithm does not terminate if and only if $s \in \tilde{F}^{-1}_{k}(I)$ for some $s \in S$.
  We consider two cases:

  \textbf{Case 1:} $s \in F^{-1}(I)$ for some $s \in S$.
  There are only $|S|$ intervals in $\supp\left(P^{(k)}_{F^{-1}}\right)$ for which this is the case (by Claim~\ref{clm:algorithm-intervals-disjoint}), and for all such intervals $J$, $P^{(k)}_{F^{-1}}(J) = 2^{-k}$.
  Therefore, letting $\mathcal{I}_{1}$ denote this set of intervals,
  \begin{equation}
    P^{(k)}_{F^{-1}}(\mathcal{I}_{1}) \le 2^{-k} |S|.
  \end{equation}

  \textbf{Case 2:} $s \not\in F^{-1}(I)$ for any $s \in S$ (but $\tilde{F}^{-1}_{k}$ does contain some $s \in S$). Then, we use our assumption about the precision of $\tilde{F}^{-1}_{k}$ (Equation~\ref{equ:interval-cdf-precision-in-limit}).
  In particular, Equation~\ref{equ:interval-cdf-precision-in-limit} implies that there is a function $\delta(k)$ such that
  \begin{equation*}
    \lim_{k \rightarrow \infty} \delta(k) = 0,
  \end{equation*}
  and for any interval $J$,
  \begin{equation}
    d\left(J, \tilde{F}^{-1}_{k}(I)\right) \le \delta(k),
  \end{equation}
  where $d(I,J)$ is defined as $|I[0] - J[0]| + |I[1] - J[1]|$.
  In particular, this true of $J = F^{-1}(I)$.
  Then, the condition for case 2 is only possible if at least one of the endpoints of $F^{-1}(I)$ is within distance $\delta(k)$ of some $s \in S$.
  We split this into two subcases.

  \textbf{Case 2(a):}
  Only one of the endpoints of $F^{-1}(I)$ is within distance $\delta(k)$ of some $s \in S$ (the other endpoint may be far from all elements of $S$, or it may be close to some other $t \in S, t \not = s$).
  By Claim~\ref{clm:algorithm-intervals-disjoint}, this can be true of at most two intervals in $\supp\left(P^{(k)}_{F^{-1}}\right)$ per element of $S$.
  Denote this set of intervals $\mathcal{I}_{2}$.
  There are at most $2|S|$ of these intervals, and each has probability $2^{-k}$ of being chosen, so
  \begin{equation}
    P^{(k)}_{F^{-1}}(\mathcal{I}_{2}) \le 2^{1-k} |S|.
  \end{equation}

  \textbf{Case 2(b):}
  There exists an $s \in S$ such that both endpoints of $F^{-1}(I)$ are within $\delta(k)$ of $s$.
  Let $\mathcal{I}_{3}$ denote the set of the intervals of $\supp\left(P^{(k)}_{F^{-1}}\right)$ meeting this condition.
  Then, since the intervals in $\mathcal{I}_{3}$ are disjoint, the total length of the intervals in $\mathcal{I}_{3}$ is at most $2 \cdot \delta(k) \cdot |S|$.
  So,
  \begin{align*}
    P^{(k)}_{F^{-1}}(\mathcal{I}_{3}) &= \mathcal{D}_{\mu,\beta}\left(\bigcup_{J \in \mathcal{I}_{3}} J\right) \\
    &\le \Omega_{\mathcal{D}_{\mu,\beta}}(2 \cdot \delta(k) \cdot |S|),
  \end{align*}
  where $\Omega_{\mathcal{D}}(x)$ denotes the maximum probability that distribution $\mathcal{D}$ assigns to a subset of reals of total length at most $x$.
  For the Laplace distribution, $lim_{|x| \rightarrow 0} \Omega_{\mathcal{D}}(x) = 0$.

  Combining these cases we get
  \begin{align*}
    P^{(k)}(\bot) &\le P^{(k)}_{F^{-1}}(\mathcal{I}_{1}) + P^{(k)}_{F^{-1}}(\mathcal{I}_{2}) + P^{(k)}_{F^{-1}}(\mathcal{I}_{3}) \\
    &\le 2^{-k}|S| + 2^{1-k} |S| + \Omega_{\mathcal{D}_{\mu,\beta}}(2 \cdot \delta(k) \cdot |S|) \\
    &\defeq \alpha(k).
  \end{align*}
  Then the claim follows since $\lim_{k \rightarrow \infty} \alpha(k) = 0$.
\end{proof}

Finally, we show how Claims~\ref{clm:probability-lower-bound} and \ref{clm:probability-non-termination} imply that $\lim_{k \rightarrow \infty} TVD( P^{(k)}, Q) = 0$.
\begin{align*}
  \mathsf{TVD}(P^{(k)}, Q) &= \frac{1}{2} \sum_{s \in S \cup \{\bot\}} \left|P^{(k)}(s) - Q(s)\right| \\
  &= \frac{1}{2} \sum_{s \in S} \left|P^{(k)}(s) - Q(s)\right|  + \frac{1}{2} P^{(k)}(\bot) \tag{since $\bot$ is not in the support of Q} \\
  &= \frac{1}{2} \sum_{s \in S} \left( Q(s) - P^{(k)}(s) \right) + \frac{1}{2} P^{(k)}(\bot) \tag{by Claim~\ref{clm:probability-lower-bound}} \\
  &= \frac{1}{2} - \frac{1}{2} \sum_{s \in S} P^{(k)}(s) + \frac{1}{2} P^{(k)}(\bot) \\
  &= \frac{1}{2} - \frac{1}{2} P^{(k)}(S) + \frac{1}{2} P^{(k)}(\bot) \\
  &= \frac{1}{2} - \frac{1}{2} \left( 1 - P^{(k)}(\bot)  \right) + \frac{1}{2} P^{(k)}(\bot) \\
  &= P^{(k)}(\bot)
\end{align*}

Therefore, by Claim~\ref{clm:probability-non-termination},
\begin{equation}
  \lim_{k \rightarrow \infty} \mathsf{TVD}(P^{(k)}, Q) = 0
\end{equation}
which completes the proof.
\end{document}